\newcommand{\A}{\mathcal{A}}
\newcommand{\G}{\mathcal{G}}
\renewcommand{\P}{\mathfrak{A}}
\newcommand{\C}{\mathfrak{C}(\Am,\e)}
\newcommand{\Z}{\mathbb{Z}}
\newcommand{\Q}{\mathbb{Q}}
\newcommand{\2}{\textbf{2}}
\newcommand{\Am}{\textbf{A}}
\newcommand{\del}{\partial}
\newcommand{\vv}{\bar{v}}
\newcommand{\e}{\bar{e}}
\newtheorem{thm}{Theorem}
\title{Fractional Elements in Abelian Automata Groups}
\author{Chris Grossack}
\address%
{%
  Carnegie Mellon University, Pittsburgh, USA\\
  \email{cgrossac@alumni.cmu.edu}
}
\begin{document}
\maketitle

\begin{abstract}
  A theorem of Nekrashevych and Sidki shows the Mealy Automata
  structures one can place on $\Z^m$ are parametrized by a family of
  matrices (called ``$\frac{1}{2}$-integral'') and a choice of 
  residuation vector $\e \in \Z^m$. While the impact of the chosen 
  matrix is well understood, the impact of the residuation vector on the
  resulting structure was seemingly sporadic. 

  In this paper, we characterize the impact of the residuation vector $\e$
  by recognizing an initial structure when $\e$ is the first standard basis 
  vector. All other choices of $\e$ extend this initial structure by adding
  ``fractional elements'' in a way we make precise. 

  \keywords{Abelian Automata \and Group Theory \and Transducer \and Module Theory}
\end{abstract}

\section{Background}
Finite State Automata are combinatorial objects which encode relations 
between words over some alphabet. Automata provide deep connections between
combinatorics, algebra, and logic, and are essential tools in contemporary 
computer science. One such link is in the decidability of truth in a structure
whose relations are all computable by automata. One can combine these automata 
into more complicated automata representing logical sentences in such a way 
that a sentence is true if and only if a simple reachability condition holds
\cite{Brny07:automatic_structures}. This gives a simple proof that the theory 
of $\mathbb{N}$ with $+$ and $<$, for example, is decidable.

Different kinds of automata encode different kinds of information, and
in this article we will be interested in \textbf{Mealey Automata} which encode
functions from a set of words to itself. Indeed, the functions we consider
will all be invertible (and the inverses are comutable by automata as well),
and thus they will generate \textbf{Automata Groups}.
These groups are surprisingly complicated, and a classification of all groups 
generated by three state automata over the alphabet $\2 = \{0,1\}$ is an 
extremely difficult problem, though much impressive progress has been 
made \cite{Bondarenko09:three_state}. This complexity can be useful, 
as automata groups have become a rich source of examples and counterexamples
in group theory
\cite{Nekrashevych05:self_similar_groups%
     ,Sidki00:one_rooted_trees%
     ,GrigorchukNS00:automata_groups%
     }. 
Most notably, automata groups provide examples of finitely generated 
infinite torsion groups, with application to 
Burnside's Problem \cite{Gupta83:burnside}, and automata groups have
provided the only examples of groups of intermediate growth, providing 
counterexamples to Milnor's Conjecture regarding the existence of such groups
\cite{Grigorchuk11:Milnor}. In fact, one of the simplest conceivable automata 
(shown below) already generates the lamplighter group $\Z/2\Z \wr \Z$, as
is shown in \cite{GrigorchukZuk01:lamplighter}.

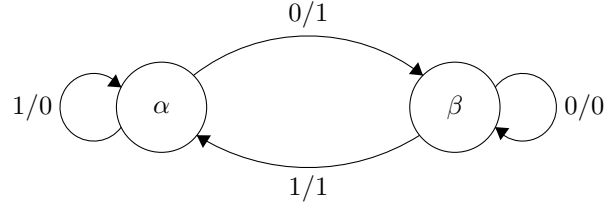
\begin{figure}
\begin{center}
\begin{tikzpicture}[scale=0.2]
\tikzstyle{every node}+=[inner sep=0pt]
\draw [black] (15.8,-29.2) circle (3);
\draw (15.8,-29.2) node {$\alpha$};
\draw [black] (35.3,-29.2) circle (3);
\draw (35.3,-29.2) node {$\beta$};
\draw [black] (13.12,-30.523) arc (-36:-324:2.25);
\draw (8.55,-29.2) node [left] {$1/0$};
\fill [black] (13.12,-27.88) -- (12.77,-27) -- (12.18,-27.81);
\draw [black] (17.918,-27.085) arc (128.02449:51.97551:12.39);
\fill [black] (33.18,-27.09) -- (32.86,-26.2) -- (32.24,-26.99);
\draw (25.55,-23.96) node [above] {$0/1$};
\draw [black] (37.98,-27.877) arc (144:-144:2.25);
\draw (42.55,-29.2) node [right] {$0/0$};
\fill [black] (37.98,-30.52) -- (38.33,-31.4) -- (38.92,-30.59);
\draw [black] (32.957,-31.065) arc (-57.68311:-122.31689:13.856);
\fill [black] (18.14,-31.06) -- (18.55,-31.91) -- (19.09,-31.07);
\draw (25.55,-33.71) node [below] {$1/1$};
\end{tikzpicture}
\end{center}

\caption{An automaton generating the lampligher group}
\label{fig:1}
\end{figure}

\subsection{Some Important Definitions}

Recall $\2 = \{0,1\}$. For our purposes, a \textbf{Mealey Automaton} is a 
tuple $\A = (S, \tau)$ where $S$ is the \textbf{State Set}, and 
$\tau : S \times \2 \to S \times \2$ is the \textbf{Transition Function}. 
We represent $\A$ as a (directed, multi-)graph with $S$ as vertices and an
edge from $s_1$ to $s_2$ (labeled by $a/b$) exactly when $\tau(s_1,a) = (s_2,b)$.
Following Sutner, whenever we have two parallel edges labeled $0/0$ and $1/1$,
we instead write one unlabeled edge to remove clutter.
We write $(\del_a s, \underline{s}(a)) = \tau(s,a)$ and call $\del_0 s$ 
(resp. $\del_1 s$) the \textbf{0-residual} (resp. \textbf{1-residual}) of $s$.

We can extend $\underline{s}$ to a length preserving function on the 
free monoid $\2^*$ as follows (here juxtaposition is concatenation, 
and the empty word $\varepsilon$ is the identity):

$\underline{s} : \2^* \to \2^*$
\begin{align*}
  \underline{s}(\varepsilon) &= \varepsilon\\
  \underline{s}(ax)       &= a' \underline{s'}(x) 
  ~~~(\text{where } (s', a') = \tau(s,a))
\end{align*}

\begin{figure}[H]
  \centering
  \begin{subfigure}{.16\textwidth}
    \centering
        \begin{tikzpicture}[scale=0.07]
        \tikzstyle{every node}+=[inner sep=0pt]
        \draw [black, fill=blue!30] (26.6,-12.1) circle (3);
        \draw (26.6,-12.1) node {$x$};
        \draw [black] (17.4,-25.6) circle (3);
        \draw (17.4,-25.6) node {$y$};
        \draw [black] (35.7,-25) circle (3);
        \draw (35.7,-25) node {$z$};
        \draw [black] (16.648,-22.708) arc (-174.31762:-254.2298:9.658);
        \fill [black] (16.65,-22.71) -- (17.07,-21.86) -- (16.07,-21.96);
        \draw (17.67,-14.97) node [left] {$1/0$};
        \draw [black] (33.761,-27.277) arc (-48.14861:-128.09563:11.117);
        \fill [black] (33.76,-27.28) -- (32.83,-27.44) -- (33.5,-28.18);
        \draw [black] (29.501,-12.822) arc (67.8011:2.5992:10.451);
        \fill [black] (29.5,-12.82) -- (30.05,-13.59) -- (30.43,-12.66);
        \draw [black] (32.758,-24.474) arc (-108.88288:-180.71682:9.832);
        \fill [black] (32.76,-24.47) -- (32.16,-23.74) -- (31.84,-24.69);
          \draw (27.31,-22.21) node [left] {$0/1$};
        \end{tikzpicture}
    \caption{$\underline{x}(0110)$}
  \end{subfigure}%
  \begin{subfigure}{.16\textwidth}
    \centering
        \begin{tikzpicture}[scale=0.07]
        \tikzstyle{every node}+=[inner sep=0pt]
        \draw [black] (26.6,-12.1) circle (3);
        \draw (26.6,-12.1) node {$x$};
        \draw [black] (17.4,-25.6) circle (3);
        \draw (17.4,-25.6) node {$y$};
        \draw [black, fill=blue!30] (35.7,-25) circle (3);
        \draw (35.7,-25) node {$z$};
        \draw [black] (16.648,-22.708) arc (-174.31762:-254.2298:9.658);
        \fill [black] (16.65,-22.71) -- (17.07,-21.86) -- (16.07,-21.96);
        \draw (17.67,-14.97) node [left] {$1/0$};
        \draw [black] (33.761,-27.277) arc (-48.14861:-128.09563:11.117);
        \fill [black] (33.76,-27.28) -- (32.83,-27.44) -- (33.5,-28.18);
        \draw [black] (29.501,-12.822) arc (67.8011:2.5992:10.451);
        \fill [black] (29.5,-12.82) -- (30.05,-13.59) -- (30.43,-12.66);
        \draw [black] (32.758,-24.474) arc (-108.88288:-180.71682:9.832);
        \fill [black] (32.76,-24.47) -- (32.16,-23.74) -- (31.84,-24.69);
        \draw (27.31,-22.21) node [left] {$0/1$};
        \end{tikzpicture}
    \caption{$1 \underline{z}(110)$}
  \end{subfigure}%
  \begin{subfigure}{.16\textwidth}
    \centering
        \begin{tikzpicture}[scale=0.07]
        \tikzstyle{every node}+=[inner sep=0pt]
        \draw [black, fill=blue!30] (26.6,-12.1) circle (3);
        \draw (26.6,-12.1) node {$x$};
        \draw [black] (17.4,-25.6) circle (3);
        \draw (17.4,-25.6) node {$y$};
        \draw [black] (35.7,-25) circle (3);
        \draw (35.7,-25) node {$z$};
        \draw [black] (16.648,-22.708) arc (-174.31762:-254.2298:9.658);
        \fill [black] (16.65,-22.71) -- (17.07,-21.86) -- (16.07,-21.96);
        \draw (17.67,-14.97) node [left] {$1/0$};
        \draw [black] (33.761,-27.277) arc (-48.14861:-128.09563:11.117);
        \fill [black] (33.76,-27.28) -- (32.83,-27.44) -- (33.5,-28.18);
        \draw [black] (29.501,-12.822) arc (67.8011:2.5992:10.451);
        \fill [black] (29.5,-12.82) -- (30.05,-13.59) -- (30.43,-12.66);
        \draw [black] (32.758,-24.474) arc (-108.88288:-180.71682:9.832);
        \fill [black] (32.76,-24.47) -- (32.16,-23.74) -- (31.84,-24.69);
        \draw (27.31,-22.21) node [left] {$0/1$};
        \end{tikzpicture}
    \caption{$11 \underline{x}(10)$}
  \end{subfigure}%
  \begin{subfigure}{.16\textwidth}
    \centering
      \begin{tikzpicture}[scale=0.07]
        \tikzstyle{every node}+=[inner sep=0pt]
        \draw [black] (26.6,-12.1) circle (3);
        \draw (26.6,-12.1) node {$x$};
        \draw [black, fill=blue!30] (17.4,-25.6) circle (3);
        \draw (17.4,-25.6) node {$y$};
        \draw [black] (35.7,-25) circle (3);
        \draw (35.7,-25) node {$z$};
        \draw [black] (16.648,-22.708) arc (-174.31762:-254.2298:9.658);
        \fill [black] (16.65,-22.71) -- (17.07,-21.86) -- (16.07,-21.96);
        \draw (17.67,-14.97) node [left] {$1/0$};
        \draw [black] (33.761,-27.277) arc (-48.14861:-128.09563:11.117);
        \fill [black] (33.76,-27.28) -- (32.83,-27.44) -- (33.5,-28.18);
        \draw [black] (29.501,-12.822) arc (67.8011:2.5992:10.451);
        \fill [black] (29.5,-12.82) -- (30.05,-13.59) -- (30.43,-12.66);
        \draw [black] (32.758,-24.474) arc (-108.88288:-180.71682:9.832);
        \fill [black] (32.76,-24.47) -- (32.16,-23.74) -- (31.84,-24.69);
        \draw (27.31,-22.21) node [left] {$0/1$};
        \end{tikzpicture}
    \caption{$110 \underline{y}(0)$}
  \end{subfigure}%
  \begin{subfigure}{.16\textwidth}
    \centering
        \begin{tikzpicture}[scale=0.07]
        \tikzstyle{every node}+=[inner sep=0pt]
        \draw [black] (26.6,-12.1) circle (3);
        \draw (26.6,-12.1) node {$x$};
        \draw [black] (17.4,-25.6) circle (3);
        \draw (17.4,-25.6) node {$y$};
        \draw [black, fill=blue!30] (35.7,-25) circle (3);
        \draw (35.7,-25) node {$z$};
        \draw [black] (16.648,-22.708) arc (-174.31762:-254.2298:9.658);
        \fill [black] (16.65,-22.71) -- (17.07,-21.86) -- (16.07,-21.96);
        \draw (17.67,-14.97) node [left] {$1/0$};
        \draw [black] (33.761,-27.277) arc (-48.14861:-128.09563:11.117);
        \fill [black] (33.76,-27.28) -- (32.83,-27.44) -- (33.5,-28.18);
        \draw [black] (29.501,-12.822) arc (67.8011:2.5992:10.451);
        \fill [black] (29.5,-12.82) -- (30.05,-13.59) -- (30.43,-12.66);
        \draw [black] (32.758,-24.474) arc (-108.88288:-180.71682:9.832);
        \fill [black] (32.76,-24.47) -- (32.16,-23.74) -- (31.84,-24.69);
        \draw (27.31,-22.21) node [left] {$0/1$};
        \end{tikzpicture}
    \caption{$1100 \underline{z}(\varepsilon)$}
  \end{subfigure}%
  \begin{subfigure}{.16\textwidth}
    \centering
        \begin{tikzpicture}[scale=0.07]
        \tikzstyle{every node}+=[inner sep=0pt]
        \draw [black] (26.6,-12.1) circle (3);
        \draw (26.6,-12.1) node {$x$};
        \draw [black] (17.4,-25.6) circle (3);
        \draw (17.4,-25.6) node {$y$};
        \draw [black] (35.7,-25) circle (3);
        \draw (35.7,-25) node {$z$};
        \draw [black] (16.648,-22.708) arc (-174.31762:-254.2298:9.658);
        \fill [black] (16.65,-22.71) -- (17.07,-21.86) -- (16.07,-21.96);
        \draw (17.67,-14.97) node [left] {$1/0$};
        \draw [black] (33.761,-27.277) arc (-48.14861:-128.09563:11.117);
        \fill [black] (33.76,-27.28) -- (32.83,-27.44) -- (33.5,-28.18);
        \draw [black] (29.501,-12.822) arc (67.8011:2.5992:10.451);
        \fill [black] (29.5,-12.82) -- (30.05,-13.59) -- (30.43,-12.66);
        \draw [black] (32.758,-24.474) arc (-108.88288:-180.71682:9.832);
        \fill [black] (32.76,-24.47) -- (32.16,-23.74) -- (31.84,-24.69);
        \draw (27.31,-22.21) node [left] {$0/1$};
        \end{tikzpicture}
    \caption{$1100$}
  \end{subfigure}

  \caption%
  {%
    An example computation -- $\underline{x}(0110) = 1100$. 
  }
\end{figure}
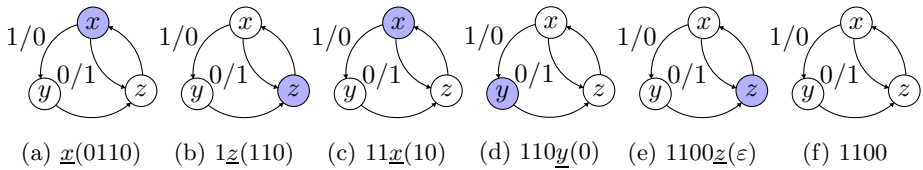

Clearly we can treat $\underline{s}$ as a function 
on $\2^\omega$, the set of infinite words, instead. In this case, automata 
provide a computable way of encoding complicated continuous functions from 
cantor space to itself, with ties to descriptive set theory%
\cite{skrzypczak15:descriptive}. If all of these functions are invertible, 
we let $\G(\A)$ denote the group generated by these functions. We write our
group additively, and denote the identity by $I$.

We can extend the definition of residuals to the whole group $\G(\A)$ by
defining the \textbf{0-residual} (resp. \textbf{1-residual}) of a 
function $f \in \G(\A)$ as the unique function 
$\del_0 f$ such that for all $w$, $f(0w) = f(0) \del_0 f(w)$ 
(resp. $f(1w) = f(1) \del_1 f(w)$). 
For a state $s \in S$, it is clear that 
$\del_a \underline{s} = \underline{s'}$, where $(s',a') = \tau(s,a)$, so
that this extends the old definition. 

Thus $\G(\A)$ can also be viewed as an automaton, by taking $\G(\A)$ 
as the state set and defining $\tau(f,i) = (\del_i f, f(i))$. Under this
definition, we find $\A$ as a natural subautomaton of $\G(\A)$ by identifying 
$s \in \A$ with $\underline{s} \in \G(\A)$.
We will call a function \textbf{Odd} if it flips its first bit, and 
\textbf{Even} otherwise, and we call an automaton \textbf{Abelian} or 
\textbf{Trivial} exactly when its group is. 

So in figure $\ref{fig:1}$, $\underline{\alpha}$ is odd, $\underline{\beta}$ 
is even, $\del_0 \underline{\alpha} = \underline{\beta}$, 
and $\del_1 \underline{\alpha} = \underline{\alpha}$.
For a more in depth description of Mealy Automata and their properties, 
see \cite{Sakarovitch09:automata_theory,Holcombe,SutnerLewi12:iter_inver_bin_trans}

The results of this paper stand on the shoulders of a result of Nekrashevich and 
Sidki that every abelian automata group is either torsion free abelian or 
boolean \cite{NekrashevychSidki04:automorphisms}. Because of this classification, 
much of the interesting structure of these groups comes from their residuation
functions. To that end, for the duration of this paper, 
homomorphisms and isomorphisms are all restricted to those 
which preserve the residuation structure in addition to the group structure.
It is a theorem by Sutner \cite{Sutner18:abelian_automata} 
that $\G(\A)$ is abelian iff for even states $\del_1 f - \del_0 f = I$ 
and for odd states $\del_1 f - \del_0 f = \gamma$, where $\gamma$ is 
independent of $f$. Moreover, the case $\gamma = I$ corresponds 
precisely to the case where $\G(\A)$ is boolean.
We now restrict ourselves further to the case where $\G(\A)$ is 
free abelian, that is to say $\G(\A) \cong \Z^m$ for some $m$,
and $\gamma \not = I$.%
\footnote%
{%
  For historical reasons we use $\Z^m$ instead of $\Z^n$ because 
  traditionally $n$ is reserved for the size of the state set of 
  an automaton.
}

\subsection{The Complete Automaton}

From the discussion above, it follows that $\Z^m \cong \G(\A)$ carries a
residuation structure, and Nekrashvych and Sidki also give a characterization 
of all possible such structures \cite{NekrashevychSidki04:automorphisms}.

Without loss of generality, we can take the odd (resp. even) states to be 
exactly the vectors with odd (resp. even) first component. The automata 
structure is given by the following affine maps
(which depend on a matrix $\Am$ and an odd vector $\e$):

\begin{align}
  \tau_{\Am, \e}(\vv, 0) &= 
  \begin{cases} 
    (\Am \vv, 0)        & \vv \text{ even}\\
    (\Am (\vv - \e), 1) & \vv \text{ odd}
  \end{cases}
  \label{eq:1}\\
  \tau_{\Am, \e}(\vv, 1) &=
  \begin{cases} 
    (\Am \vv, 1)        & \vv \text{ even}\\
    (\Am (\vv + \e), 0) & \vv \text{ odd}
  \end{cases}
  \label{eq:2}
\end{align}

In the above definition, $\Am$ is a ``$\frac{1}{2}$--integral'' matrix $\Am$ of 
$\Q$-irreducible character. This group (with its residuation structure) is 
generated by a \emph{finite} automaton exactly when $\Am$ is a contraction 
(that is, all of its complex eigenvalues have norm $<1$). By a 
$\frac{1}{2}$--integral matrix, we mean a matrix of the form

\[
\begin{pmatrix}
  \frac{a_{11}}{2} & a_{12} & \dots  & a_{1n}\\
  \vdots           & \vdots & \ddots & \vdots\\
  \frac{a_{n1}}{2} & a_{n2} & \dots  & a_{nn}\\
\end{pmatrix}
\]

\noindent
where each $a_{ij} \in \Z$. 

These matrices all have characteristic polynomial
$\chi = x^n + \frac{1}{2}g(x)$, where $g \in \Z[x]$ and has constant term 
$\pm 1$. Without loss of generality we may take $\Am$ to be in rational 
canonical form, with the coefficients of $\chi$ in the first column.

Now we see the reason for the \textbf{Residuation Vector} $\e$ in the 
above definition. Since $\Am : 2\Z \oplus \Z^{m-1} \to \Z^m$, the transition
function $\tau$ can act as simple multiplication on even vectors. However,
to ensure we have an integral output, we must first make an odd vector 
even by adding or subtracting another odd vector $\e$. It is easy to see 
that this definition gives rise to the following residuation structure:

If $\vv$ is even:
\[ \del_0 \vv = \del_1 \vv = \Am \vv \]

If $\vv$ is odd:
\[ \del_0 \vv = \Am (\vv - \e) \]
\[ \del_1 \vv = \Am (\vv + \e) \]

\noindent
Following Sutner \cite{Sutner18:abelian_automata}, for a specific matrix
$\Am$ and residuation vector $\e$ we define the \textbf{Complete Automaton} 
$\C = (\Z^m, \tau_{\Am,\e})$. If $\G(\A) \cong \C$, we say a 
function $f \in \G(\A)$ is \textbf{Located at} $\vv \in \C$ iff the isomorphism 
between $\G(\A)$ and $\C$ sends $f$ to $\vv$. Finally, given any 
state $\vv \in \C$, closing $\{ \vv \}$ under residuation will result in an
automaton $\A_{\vv}$ (which will be finite whenever $\Am$ is contracting). 
We say $\A$ is \textbf{Located at} $\vv \in \C$ iff the isomorphism sends 
$\A \subseteq \G(\A)$ to $\A_{\vv} \subseteq \C$. 

Keep in mind the distinction between the group of functions $\G(\A)$
and a particular isomorphism between $\G(\A)$ and $\C$. We will freely
identify these objects, but the location of a particular function
depends heavily on the choice of $\e$.

Nekrashevich and Sidki's theorem gives us a purely linear algebraic method
for discussing these automata groups, since a restatement of their theorem 
says that every torsion free abelian automata group $\G(\A)$ is isomorphic 
to\footnote{Recall our isomorphisms preserve the resituation structure in addition to the group structure} 
$\C$ for some $\Am$ and $\e$. Seeing this fact, it is natural to ask if,
given an automaton $\A$, we can characterize all $\Am$ and $\e$ for which 
$\A \subseteq \C$. Indeed, it is natural to ask which vector $\vv$ will $\A$
be located at in this identification.

Nekrashevych and Sidki show that each $\A$ 
has a unique matrix $\Am$ (up to GL($\Q$) similarity) which works, though
their proof is nonconstructive. We call this $\Am$ (in rational canonical form)
the \textbf{Associated Matrix} of $\A$. Unfortuately, Nekrashevych and Sidki 
leave entirely open the question of which $\e$ admit $\A$ as a subautomaton
once we have the correct $\Am$, and moreover where $\A$ is located
if an embedding into $\C$ exists. Algorithms for determining the matrix from
the automaton are given by Okano \cite{Okano15:thesis} 
and Becker \cite{Becker18:thesis}, solving part of the problem. 

In this paper we finish the job by fully
characterizing the impact of $\e$ on the residuation structure of $\C$.
For a more detailed discussion of these linear algebraic methods and their 
origins, see %
\cite{Nekrashevych05:self_similar_groups,NekrashevychSidki04:automorphisms}.

\subsection{Principal Automata}
Each abelian automaton gets a unique associated matrix as above, but each
matrix can be associated to infinitely many automata.
It was shown by Okano \cite{Okano15:thesis} that there is a 
distinguished automaton, now called the \textbf{Principal Automaton} $\P$, 
associated to each matrix. $\P(\Am)$ is defined to be 
$\P = \A_{\e_1} \cup \A_{-\e_1} \subseteq \mathfrak{C}(\Am, \e_1)$,
though there is a longstanding conjecture (introduced in the same paper) 
that in most cases this is the same machine as 
$\A_{\e_1} \subseteq \mathfrak{C}(\Am, \e_1)$. We will write $\P$ whenever $\Am$
is clear from context. The function located at $\e_1 \in \mathfrak{C}(\Am, \e_1)$ 
will be important later on, and so we write $\delta$ for this function. Notice this 
means $\P$ is the smallest automaton containing $\delta$ and $-\delta$.
We will write $\P$ when the matrix is clear from context. 

As we will see, $\delta$ is located at $\e \in \C$ for all $\e$. This will 
give us a way to compare functions in various $\C$ by using $\delta$ as a 
kind of meterstick. Indeed, since $\P$ is generated by $\pm \delta$, the
next theorem will show that for every automaton $\A$ with associated matrix $\Am$, 
$\G(\P(\Am)) \leq \G(\A)$. Thus every function in $\P$ is a $\Z$-linear 
combination of functions in $\A$. In particular, we see
$\P$ is a subautomaton of $\G(\A)$ for every $\A$ with matrix $\Am$.
While there are proofs of this claim which rely heavily on the
ambient linear algebraic structure \cite{Okano15:thesis}, 
we present here a construction which uses only the given 
automaton $\A$ to construct $\P$. Thus every $\underline{s} \in \G(\P)$ 
is already in $\G(\A)$, and the subgroup relation follows.

\begin{thm}
  For each nontrivial $\A$ with associated matrix $\Am$, $\G(\P) \leq \G(\A)$.
\end{thm}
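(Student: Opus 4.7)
The plan is to realize $\delta$ and $-\delta$ explicitly as elements of $\G(\A)$, then invoke the closure of $\G(\A)$ under residuation (viewed as a parent automaton) to obtain all of $\P$ as a subautomaton. Since $\P = \A_{\e_1} \cup \A_{-\e_1}$ is generated as an automaton by $\delta$ and $-\delta$, the subgroup relation $\G(\P) \leq \G(\A)$ is immediate once the embedding is in place.

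First I would extract from $\A$ the element $\gamma := \del_1 s - \del_0 s \in \G(\A)$ for any odd state $s \in \A$; by Sutner's theorem this is independent of the choice of $s$. Under the Nekrashevych--Sidki identification $\G(\A) \cong \C$, $\gamma$ occupies the location $2\Am\e$, where $\e$ is the residuation vector of $\A$. I would then single out $\delta \in \G(\A)$ as the element at $\e \in \C$. The complete-automaton formulas give directly that $\delta$ is odd, with $\del_0 \delta = \Am(\e - \e) = I$ and $\del_1 \delta = \Am(\e + \e) = 2\Am\e = \gamma$, which is exactly the residuation pattern satisfied by $\delta \in \P \subset \mathfrak{C}(\Am,\e_1)$ (with $\e_1$ in place of $\e$). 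The group inverse $-\delta$, located at $-\e$, mirrors its counterpart in $\P$ in the same way.

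With these ingredients in hand I would define an automaton morphism $\phi : \P \to \G(\A)$ by $\phi(\pm\delta) := \pm\delta$ and $\phi(\del_i x) := \del_i \phi(x)$, then verify by induction on residuation depth that $\phi$ is well-defined and injective. The image $\phi(\P)$ is then an isomorphic copy of $\P$ sitting inside the automaton $\G(\A)$, and the conclusion follows at once.

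The main obstacle lies in that final induction. Although both residuation trees are unfoldings of the same template driven by $\Am$, the states they visit live at different integer locations, because the offset applied at odd nodes is $\e_1$ on the $\P$ side and $\e$ on the $\G(\A)$ side; consequently one must check that the parities of the first coordinates agree at every stage, so that the same residuation rule (even versus odd case) is applied on both sides. This compatibility should follow from the $\frac{1}{2}$-integrality of $\Am$: the parity of the first coordinate of any state reached from an odd root is dictated by arithmetic constraints coming from $\Am$ and the fixed parity of the residuation vector, so the two trees remain in sync even though the locations they visit genuinely differ.
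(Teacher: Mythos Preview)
Your approach is correct in outline but takes a genuinely different route from the paper. You lean on the Nekrashevych--Sidki identification $\G(\A) \cong \C$ to locate $\delta$ at the vector $\e$, and then propose a parity induction to show this element computes the same function as $\delta$ located at $\e_1$ in $\mathfrak{C}(\Am,\e_1)$. The paper, by contrast, deliberately avoids the ambient linear algebra and works entirely at the level of functions on $\2^\omega$. It invokes Sutner's result that $\gamma$ depends only on $\Am$ (not merely on the choice of odd state within $\A$), so the $\gamma$ built from any odd state of $\A$ is \emph{literally the same function} as $\del_1\delta$ in $\P$; closing this function under residuation stays inside $\G(\A)$ because $\G(\A)$ is residuation-closed. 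A second Sutner theorem --- that $\G(\A_{\vv}) = \G(\A_{\overline{w}})$ whenever $\overline w$ transitions into $\vv$ --- then yields $\G(\P) = \G(\A_\gamma) \leq \G(\A)$ without ever pinning down $\delta$ inside $\G(\A)$ or matching residuation trees.

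Your parity induction does go through, but the heuristic you offer for it is not quite right. The parity of the first coordinate of $\Am\vv$ is \emph{not} determined by the parity of $\vv$ alone together with the shape of $\Am$ --- it depends on all coordinates of $\vv$ --- so the claim ``the parity at each stage is dictated by $\Am$ and the fixed parity of the residuation vector'' fails as stated. The correct invariant to carry through the induction is the relation $\vv' = p_{\e} \cdot \vv$ between the $\mathfrak{C}(\Am,\e)$-state $\vv'$ and the corresponding $\mathfrak{C}(\Am,\e_1)$-state $\vv$, where $p_\e$ is the polynomial with $p_\e\cdot\e_1 = \e$. Because $\Am^{-1}$ has even first row, multiplication by $p_\e(\Am^{-1})$ acts as multiplication by its odd constant term on first coordinates modulo~$2$, so parities match; and one checks directly that this relation commutes with residuation in both the even and odd cases. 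That is exactly the content of the paper's Theorem~2, so your argument effectively proves (a special case of) that result en route. What you gain is an explicit location for $\delta$ inside $\G(\A)$; what the paper gains is a construction that never leaves the automaton.
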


\begin{proof}
  It was shown in \cite{Sutner18:abelian_automata} that $\gamma$ depends
  only on the matrix $\Am$, so that for any automata $\A$ and $\A'$ with the same
  associated matrix $\Am$, and for any odd states $f \in \A$, $f' \in \A'$, 
  we have $\gamma = \del_1 f - \del_0 f = \del_1 f' - \del_0 f'$. 
  In particular, for $\delta \in \P$, we have 
  $\gamma = \del_1 \delta - \del_0 \delta = \del_1 \delta$ since
  $\del_0 \delta = \Am(\e_1 - \e_1) = \bar{0} = I$. 

  Since we know from the previous discussion that $\P$ is generated by
  $\pm \delta$, we can build it by hand by leveraging the fact that 
  $\gamma = \del_1 \delta$ is already in $\G(\A)$.

  Let $\A$ be an abelian automaton with at least one odd state.
  Note that if $\A$ has no odd states, its group is trivial, so we may
  safely ignore it.
  Put $\gamma = \del_1 o - \del_0 o$ for $o \in \A$ odd, and construct
  a new automaton by closing $\gamma$ under residuation.
  Note that this can be done using only information contained in $\A$,
  since it is easy to check that:
  \[
    \del_0(f + g) = \begin{cases} \del_0 f + \del_1 g & \text{both odd}\\
                                  \del_0 f + \del_0 g & \text{otherwise}
                    \end{cases}
  \]
  \[
    \del_1(f + g) = \begin{cases} \del_1 f + \del_0 g & \text{both odd}\\
                                  \del_1 f + \del_1 g & \text{otherwise}
                    \end{cases}
  \]
  \[
    \del_0 (-f) = - \del_1 f
  \]
  \[
    \del_1 (-f) = - \del_0 f
  \]

  Thus using the characterization by Sutner \cite{Sutner18:abelian_automata}
  that a state is odd iff it has distinct residuals, we can close $\gamma$ 
  under residuation using only information in $\A$.
  Since $\gamma \in \G(\A)$ and $\G(\A)$ is residuation closed, 
  this entire closure is a subset of $\G(\A)$. Morever, whenever $\A$ is 
  finite, $\Am$ is contracting and so $\P$ is finite too. Thus this 
  procedure can actually be carried out.

  Another theorem by Sutner \cite{Sutner18:abelian_automata} says that 
  $\G(\A_{\vv}) = \G(\A_{\overline{w}})$ whenever $\overline{w}$ transitions 
  into $\vv$. Because of this, the above closure generates the same group
  as the above closure with an additional state ($\delta$) residuating into 
  $\gamma$ and a self loop ($I$). This new machine is exactly
  $\A_{\e_1} \subseteq \mathfrak{C}(\Am,\e_1)$. Any state in $\A_{\-e_1}$ is
  the negation of a state in $\A_{e_1}$, and so 
  $\P(\Am) = \A_{\e_1} \cup \A_{-\e_1} \subseteq \G(\A)$. 
  Then $\G(\P) \leq \G(\A)$, as desired.
\end{proof}

\subsection{An Example}
Consider the following machine, $\A^3_2$:

\begin{center}
\begin{tikzpicture}[scale=0.2]
\tikzstyle{every node}+=[inner sep=0pt]
\draw [black] (26.6,-12.1) circle (3);
\draw (26.6,-12.1) node {$f$};
\draw [black] (17.4,-25.6) circle (3);
\draw (17.4,-25.6) node {$f_1$};
\draw [black] (35.7,-25) circle (3);
\draw (35.7,-25) node {$f_0$};
\draw [black] (16.648,-22.708) arc (-174.31762:-254.2298:9.658);
\fill [black] (16.65,-22.71) -- (17.07,-21.86) -- (16.07,-21.96);
\draw (17.67,-14.97) node [left] {$1/0$};
\draw [black] (33.761,-27.277) arc (-48.14861:-128.09563:11.117);
\fill [black] (33.76,-27.28) -- (32.83,-27.44) -- (33.5,-28.18);
\draw [black] (29.501,-12.822) arc (67.8011:2.5992:10.451);
\fill [black] (29.5,-12.82) -- (30.05,-13.59) -- (30.43,-12.66);
\draw [black] (32.758,-24.474) arc (-108.88288:-180.71682:9.832);
\fill [black] (32.76,-24.47) -- (32.16,-23.74) -- (31.84,-24.69);
\draw (27.31,-22.21) node [left] {$0/1$};
\end{tikzpicture}
\end{center}

As before, the unlabeled transitions both copy the input bit, however these
have been omitted for cleanliness.

Then by letting $\gamma = \del_1 f - \del_0 f = f_1 - f_0$, and closing
under residuation using the above algorithm, we construct the following 
machine ($\gamma$ is shown at the bottom left): 

\begin{center}
\begin{tikzpicture}[scale=0.2]
\tikzstyle{every node}+=[inner sep=0pt]
\draw [black] (36.4,-15.9) circle (3);
\draw (36.4,-15.9) node {\tiny $I$};
\draw [black] (37.8,-18.6) arc (60.34019:-227.65981:2.25);
\fill [black] (35.17,-18.7) -- (34.34,-19.15) -- (35.21,-19.64);
\draw [black] (55.3,-21) circle (3);
\draw (55.3,-21) node {\tiny $f_1-f$};
\draw [black] (64,-36.8) circle (3);
\draw (64,-36.8) node {\tiny $f_0-f_1$};
\draw [black] (46.9,-36.8) circle (3);
\draw (46.9,-36.8) node {\tiny $f-f_0$};
\draw [black] (28,-36.8) circle (3);
\draw (28,-36.8) node {\tiny $f_0-f$};
\draw [black] (11.8,-36.8) circle (3);
\draw (11.8,-36.8) node {\tiny $f_1-f_0$};
\draw [black] (19.6,-21) circle (3);
\draw (19.6,-21) node {\tiny $f-f_1$};
\draw [black] (22.47,-20.13) -- (33.53,-16.77);
\fill [black] (33.53,-16.77) -- (32.62,-16.53) -- (32.91,-17.48);
\draw (29.5,-19.03) node [below] {\tiny $0/1$};
\draw [black] (18.27,-23.69) -- (13.13,-34.11);
\fill [black] (13.13,-34.11) -- (13.93,-33.61) -- (13.03,-33.17);
\draw (15,-27.81) node [left] {\tiny $1/0$};
\draw [black] (14.8,-36.8) -- (25,-36.8);
\fill [black] (25,-36.8) -- (24.2,-36.3) -- (24.2,-37.3);
\draw [black] (26.59,-34.15) -- (21.01,-23.65);
\fill [black] (21.01,-23.65) -- (20.94,-24.59) -- (21.83,-24.12);
\draw (24.48,-27.74) node [right] {\tiny $0/1$};
\draw [black] (30.345,-34.939) arc (122.03013:57.96987:13.397);
\fill [black] (44.56,-34.94) -- (44.14,-34.09) -- (43.61,-34.94);
\draw (37.45,-32.4) node [above] {\tiny $1/0$};
\draw [black] (52.4,-20.22) -- (39.3,-16.68);
\fill [black] (39.3,-16.68) -- (39.94,-17.37) -- (40.2,-16.41);
\draw (47.22,-17.84) node [above] {\tiny $1/0$};
\draw [black] (56.75,-23.63) -- (62.55,-34.17);
\fill [black] (62.55,-34.17) -- (62.61,-33.23) -- (61.73,-33.71);
\draw (58.98,-30.09) node [left] {\tiny $0/1$};
\draw [black] (61,-36.8) -- (49.9,-36.8);
\fill [black] (49.9,-36.8) -- (50.7,-37.3) -- (50.7,-36.3);
\draw [black] (48.31,-34.15) -- (53.89,-23.65);
\fill [black] (53.89,-23.65) -- (53.07,-24.12) -- (53.96,-24.59);
\draw (51.78,-30.06) node [right] {\tiny $1/0$};
\draw [black] (44.711,-38.841) arc (-53.966:-126.034:12.343);
\fill [black] (30.19,-38.84) -- (30.54,-39.72) -- (31.13,-38.91);
\draw (37.45,-41.7) node [below] {\tiny $0/1$};
\end{tikzpicture}
\end{center}

It is easy to check that this is the principal machine for
$\Am = \begin{pmatrix} -1 & 1 \\ -\frac{1}{2} & 0 \end{pmatrix}$,
where $f - f_1 = \delta$ is located at $\e_1 \in \mathfrak{C}(\Am, \e_1)$.
Moreover, one can check that $f \in \A^3_2$ as above is located at 
$\e_1 \in \mathfrak{C}\left ( \Am, \begin{pmatrix} 3 \\ 2 \end{pmatrix} \right )$.
Thus, $f - f_1$ is located at 
\[ 
  \e_1 - \del_1 \e_1 = 
  \e_1 - \Am \left ( \e_1 + \begin{pmatrix} 3 \\ 2 \end{pmatrix} \right ) = 
  \e_1 - \begin{pmatrix} -2 \\ -2 \end{pmatrix} =
  \begin{pmatrix} 3 \\ 2 \end{pmatrix} \in
  \mathfrak{C}\left ( \Am, \begin{pmatrix} 3 \\ 2 \end{pmatrix} \right ).
\]

When running the algorithm in this case, we do not need to separately add
$\pm \delta$ or the inverse machine. Here $\delta$ is already in the closure of
$\gamma$ under residuation, and the machine is already closed under negation. 
The Strongly Connected Component Conjecture 
predicts that this will be the case whenever $\Am$ has characteristic 
polynomial other than $x^m - \frac{1}{2}$, which corresponds to the so called 
sausage automata. Unfortunately, this conjecture is yet unproven,
and so in the above proof we had to explicitly add in these extra states.

\section{Fractional Extensions}
Going forward, $\G = \mathfrak{C}(\Am,\e_1)$ will denote $\G(\P)$ for some 
principal machine $\P$. 

Since $\Am$ sends $2\Z \oplus \Z^{m-1}$ to $\Z^m$, 
$\Am^{-1}$ sends $\Z^m$ to $2\Z \oplus \Z^{m-1}$, and so has 
only integer entries. Thus we can give $\G$ the structure of a 
$\Z[x]$ module where $x \cdot \vv = \Am^{-1}\vv$, extended linearly. Further, 
since $\Am$ has irreducible characteristic polynomial so does $\Am^{-1}$. Thus 
this module is cyclic, and is generated by $\e_1 = \delta$. The cyclicity of 
this module tells us that we can identify our states $\Z^m$ with $\Z[x] / \chi^*$ 
where $\chi^*$ is the characteristic polynomial of $\Am^{-1}$ and has degree 
$m$. This identifies a vector $\vv$ with the polynomial $p_{\vv}$ whose 
coefficients are the coordinates of $\vv$ (the constant term is the first 
coefficient). Said another way, $\vv = p_{\vv} \cdot \e_1$.

Now for $p \in \Z[x]$ with odd constant term, we write
$p^{-1} \cdot \G$ in place of $\G(\mathfrak{C}(\Am, p \cdot \e_1))$.
That is to say, $p^{-1} \cdot \G$ has as its states $\Z^m$ and as its 
odd residuations
$\del_0 \vv = \Am (\vv - p \cdot \e_1)$, and 
$\del_1 \vv = \Am (\vv + p \cdot \e_1)$.
We will only discuss polynomials $p$ with an odd constant term, as 
this ensures $p \cdot \e_1$, our residuation vector, is odd.

We call $p^{-1} \cdot \G$ the \textbf{Fractional Extension} of $\G$ by $p$.
To first justify the use of the word ``extension'', notice 
$\G \hookrightarrow p^{-1} \cdot \G$ for all $p$ by the
homomorphism $\vv \mapsto p \cdot \vv$. 
Further, if $p$ is not a unit in $\Z[x] / \chi^*$, this 
homomorphism is \emph{not} surjective. That is to say $\G$ is a proper 
subgroup of $p^{-1} \cdot \G$.
This observation is true in more generality, as shown below.
Recall we work in the category whose arrows also preserve the residuation
structure, and thus an isomorphism identifying two vectors will show that
those vectors compute the same function on $\2^\omega$. However, it means
we must show that our embeddings genuinely do preserve this structure.

\begin{thm}
  If $rp = q$ in $\Z[x] / \chi^*$, then 
  $p^{-1} \cdot \G \hookrightarrow q^{-1} \cdot \G$, 
  with a canonical injection $\varphi_r : \vv \mapsto r \cdot \vv$. 
  In particular, if $r$ is a unit, then $p^{-1} \cdot \G \cong q^{-1} \cdot \G$.
\end{thm}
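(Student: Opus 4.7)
The plan is to verify four things about $\varphi_r(\vv) = r \cdot \vv$: that it is a group homomorphism, that it is injective, that it preserves the parity of vectors, and that it intertwines the two residuation structures. The first two are quick: multiplication by $r$ in the $\Z[x]$-module structure is automatically $\Z$-linear, hence a group homomorphism. For injectivity, $\chi^*$ is monic in $\Z[x]$ and $\Q$-irreducible, so by Gauss's lemma $\Z[x]/\chi^*$ is an integral domain; since $q = rp$ has odd constant term we have $r \neq 0$, and multiplication by a nonzero element in a domain is injective.

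The main obstacle will be showing that $\varphi_r$ preserves parity, since the parity of the first coordinate of a vector is not obviously a ring-theoretic notion on $\Z[x]/\chi^*$. The key observation is that $\chi^*$ has even constant term: because $\chi = x^m + \tfrac{1}{2} g(x)$ with $g(0) = \pm 1$, we get $\det(\Am) = \pm \tfrac{1}{2}$, hence $\chi^*(0) = (-1)^m \det(\Am^{-1}) = \pm 2$. Consequently, reducing any polynomial modulo $\chi^*$ leaves its constant term unchanged modulo $2$. This means the parity of the first coordinate of $r \cdot \vv$ is congruent to $r(0)\, p_{\vv}(0) \pmod{2}$. Applying this to the hypothesis $rp = q$ with $p(0), q(0)$ both odd forces $r(0)$ to be odd as well, so $\varphi_r$ preserves parity on the nose.

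With parity in hand, compatibility with residuation is a direct calculation turning on two facts: $\Am$ commutes with multiplication by $r$ (since $r$ acts through $\Am^{-1}$), and the equality $r p \cdot \e_1 = q \cdot \e_1$ inside $\Z^m$ is just a restatement of $rp = q$ in $\Z[x]/\chi^*$. For even $\vv$, both sides of $\del_i$ yield $\Am(r \vv)$. For odd $\vv$, we compute $\varphi_r(\Am(\vv \pm p \cdot \e_1)) = \Am(r\vv \pm q \cdot \e_1)$, which is exactly the residuation in $q^{-1} \cdot \G$ applied to $r \vv$. The output bits agree because they depend only on parity, which has already been handled. The unit case is then immediate: if $rs \equiv 1$ in $\Z[x]/\chi^*$, then $\varphi_s$ is a two-sided inverse to $\varphi_r$, upgrading the embedding to an isomorphism.
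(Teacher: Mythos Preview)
Your proof is correct and follows essentially the same route as the paper: the residuation computation in the even and odd cases is identical to what the paper does. You actually supply more detail than the paper on two points it leaves implicit---the paper simply asserts that $r$ has odd constant term (you derive this from $\chi^*(0) = \pm 2$) and never explicitly verifies injectivity (you handle it via the integral-domain argument)---so your version is, if anything, more complete.
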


\begin{proof}
  It is clear that $\varphi_r$ preserves the $\Z[x]$-module structure, so
  it remains to show that it preserves the residuation structure.

  Let $rp = q$, $f \in p^{-1} \cdot \G$ located at $\vv$. 
  Consider $f' \in q^{-1} \cdot \G$ located at $r \cdot \vv$.
  First note $f$ and $f'$ have the same parity, since 
  $r$ has odd constant term, and so $\vv$ and $r \cdot \vv$
  have the same parity. Now, consider the residuals of $f$ and $f'$. 
  
  If $f$ is even, then 
  \[ \del_0 f' = \Am (r \cdot \vv) = r \cdot \Am \vv = r \cdot \del_0 f \]

  If $f$ is odd, then
  \[ \del_0 f' = \Am (r \cdot \vv - q \cdot \e_1) 
               = r \cdot \Am (\vv - p \cdot \e_1)
               = r \cdot \del_0 f \]
  A similar argument shows $\del_1 f' = r \cdot \del_1 f$

  If $r$ is a unit, then $r^{-1}$ also has odd constant term 
  (since $r r^{-1} = 1$ has odd constant term) and so $\varphi_r$
  is an isomorphism with inverse $\varphi_{r^{-1}}$.
\end{proof}

The previous proof has justified the use of the word ``extension'', 
but it is still not clear why this extension should be ``fractional''.
As the previous proof shows, $p \cdot \vv \in p^{-1} \cdot \G$, 
computes exactly the same function as $\vv \in \G$.
However, most vectors cannot be written as a multiple of $p$.
What do they do as functions?
We call such vectors (and their corresponding functions)
\textbf{Fractional}, due to the following analogy:

Say we are only allowed to compute with $\Z$, but we want the ability
to work with fractions. We can approximate $\Q$ by allowing fractions
with fixed denominator. If we write 
$\frac{1}{n} \mathbb{Z} = \{ \frac{k}{n} ~|~ k \in \Z \}$, 
then we clearly see $\Z \cong \frac{1}{n} \Z \leq \Q$ for every $n$. 
So if we want to be able to talk about fractions like 
$\frac{1}{3}$, we might work in the ``extension'' $\frac{1}{3} \Z$, where 
$\Z \hookrightarrow \frac{1}{3}\Z$ by the embedding $k \mapsto 3k$.
The new elements, then, are ``fractional'' in the obvious sense. 
Once we have made this identification, we can (computationally) forget
about the fact that we're working in an extension at all. The equation
$4 + 6 = 10$ remains true, we simply reinterpret this as being 
$\frac{4}{3} + \frac{6}{3} = \frac{10}{3}$. 

Scaling $\G \cong \Z^m$ by some polynomial $p^{-1}$ is \emph{exactly} the
same operation. However, in this new higher dimensional setting, we have 
more degrees of freedom, and must therefore index by polynomials $p$ instead
of mere constants. Note this explains all of the ambiguity regarding the
location of a function $f$ in an extension $\C$. Since $\e$ correpsonds to
our choice of $p^{-1}$, the change in location of $f$ is entirely analogous
to the change in position of $\frac{1}{2} \in \Q$ in various $\frac{1}{n}\Z$.
In $n=10$, $\frac{1}{2}$ shows up at $5$. In $n=6$, $\frac{1}{2}$ shows up 
at $3$. But mysteriously, $\frac{1}{2}$ doesn't appear when $n=3$\ldots 
The seemingly sporadic $\e$ in which a function $f$ can be found is explained
by exactly the same phenomenon!

Of course, there is a minimal $n$ for which $\frac{1}{2} \in \frac{1}{n}\Z$.
Moreover, once we know it, we can characterize exactly where $\frac{1}{2}$
will be in all extensions $\frac{1}{m}\Z$ where $n \mid m$. In fact, we can
do the same thing for automata.

\section{Characterizing Automata}
Since each automaton $\A$ is a subautomaton of some $\C$,
equivalently some $p^{-1} \cdot \G$, there should be a minimal $\e$ 
(up to multiplication by units) which still has $\A$ as a subautomaton. 

Notice that if we locate $\A$ at $\e_1 \in p^{-1} \cdot \G$, 
then there can be no smaller polynomial $q$ (in the division ordering)
which also places $\A$ at an integral position. The following theorem 
shows this is always possible.

\begin{thm}
  Every nontrivial abelian automaton $\A$ can be 
  located at $\e_1$ in $p^{-1} \cdot \G$ for some $p$.
\end{thm}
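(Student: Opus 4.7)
The natural strategy is to combine the Nekrashevych--Sidki structure theorem with the fractional-extension machinery of Theorem~2.

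First I would invoke Nekrashevych--Sidki: since $\A$ is a nontrivial abelian automaton with associated matrix $\Am$, there exists a residuation-preserving isomorphism $\G(\A) \cong \C$ for some odd residuation vector $\e$. Pick a state $s \in \A$ whose residuation closure is all of $\A$; this iso sends $s$ to some vector $\vv \in \C$. Passing to the cyclic $\Z[x]/\chi^*$-module view of the previous section, I would write $\e = q \cdot \e_1$ and $\vv = r \cdot \e_1$ for polynomials $q, r$ with odd constant term, so that $\G(\A) \cong q^{-1} \cdot \G$ and $s$ sits at $r \cdot \e_1$.

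Theorem~2 points directly to the desired move: its canonical embedding $\varphi_r : p^{-1} \cdot \G \hookrightarrow q^{-1} \cdot \G$ sends $\e_1$ to $r \cdot \e_1 = \vv$ precisely when $rp = q$ in $\Z[x]/\chi^*$. Once such a $p$ is identified, pulling $\A_{\vv} \subset q^{-1} \cdot \G$ back along $\varphi_r$ realizes $\A$ as $\A_{\e_1} \subset p^{-1} \cdot \G$, which is exactly the sense in which $\A$ is located at $\e_1$ in $p^{-1} \cdot \G$.

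The core technical obstacle, and the main content of the proof, is therefore exhibiting such a $p$: one must show that $p := q/r$ is a well-defined element of $\Z[x]/\chi^*$ with odd constant term, equivalently that $r \mid q$ in the ring. I would pursue this by tracking the element $\gamma = \del_1 s - \del_0 s$. On one hand, the identity $\gamma = 2 \Am \e$ translates in polynomial form to $\gamma \leftrightarrow 2 x^{-1} q \cdot \e_1$; on the other, the residuation-closure of $\A$ about $s$ exhibits $\gamma$ as lying in a controlled $\Z[x]$-submodule around $s = r \cdot \e_1$, and comparing the two descriptions of $\gamma$ should force the divisibility of $q$ by $r$. The odd-constant-term condition on $p$ then falls out of the congruence $r p \equiv q \pmod 2$, both $r$ and $q$ being odd.
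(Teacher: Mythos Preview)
Your reduction to the divisibility claim $r \mid q$ in $\Z[x]/\chi^*$ is correct in spirit, but that divisibility is precisely the hard content of the theorem, and your sketch does not establish it. The observation that $\gamma = 2\Am\e$ corresponds to $h(x)\,q\cdot\e_1$ for some integer polynomial $h$ is fine, but ``$\gamma$ lies in a controlled $\Z[x]$-submodule around $r\cdot\e_1$'' is not made precise, and once you try to make it precise you will find that the residuation closure of $s$ generates a subgroup involving \emph{both} $r\cdot\e_1$ and $q\cdot\e_1$ (via $\e$) from the very first step. Knowing that this subgroup is all of $\Z^m$ does not by itself force $r \mid q$: you would need to isolate a relation expressing $q$ as a $\Z[x]$-multiple of $r$ alone, and your $\gamma$ argument does not do that. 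In fact the paper derives the statement ``$\A$ sits at $r\cdot\e_1$ in $q^{-1}\cdot\G$ only when $r \mid q$'' \emph{as a corollary} of this theorem, so invoking it here is circular.

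The paper sidesteps the divisibility question entirely by \emph{constructing} the realization rather than reparametrizing a given one. After reducing to the strongly connected case, it picks an odd state $f$, follows a nontrivial directed cycle from $f$ back to $f$, and reads off an equation of the form $p_1(\Am)\,\vv_f = p_2(\Am)\,\e$; irreducibility of $\chi$ makes $p_1(\Am)$ and $p_2(\Am)$ invertible, so one simply \emph{sets} $\vv_f = \e_1$ and solves for $\e$. No divisibility in $\Z[x]/\chi^*$ is ever needed. Your approach could be repaired by supplying exactly such a cycle relation (which would then give $r \mid q$ a posteriori), but at that point you are reproducing the paper's argument.

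Two smaller issues: you should require $s$ to be odd so that $r$ has odd constant term (otherwise $\varphi_r$ need not preserve parity, and Theorem~2 does not apply), and the existence of a single state whose residuation closure is all of $\A$ is not automatic --- the paper handles this with the reduction to a strongly connected component.
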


\begin{proof}
  It is a theorem by Sutner \cite{Sutner18:abelian_automata} that every 
  finite state abelian automaton residuates into a strongly connected component, 
  and further that this component generates the same group as the entire 
  machine. So we may, with no loss of generality, assume our machine is 
  strongly connected (that is, every state except possibly $I$ has a path to
  every other state).

  Let $f$ be an odd state in $\A$. Then at least one of $\del_0 f$ and 
  $\del_1 f$ is not equal to $f$. So there is some nontrivial cycle
  from $f$ to itself, which we can represent by a matrix equation 
  relating $\vv_f$, and $\e$. (Here $\vv_f$ is where $f$ will be located, 
  and $\e$ will be the residuation vector). 
  We can then rearrange this equation to obtain 
  $p_1(\Am)\vv_f = p_2(\Am)\e$.

  Now $p_1, p_2 \in \Z[x]$, and $\Am$ has irreducible character over $\Z$.
  Then the eigenvalues of $p(\Am)$ are precisely $p(\lambda)$
  where $\lambda$ is an eigenvalue of $\Am$, so $\Am$'s invertibility implies
  the invertibility of both $p_1(\Am)$ and $p_2(\Am)$. Thus

  \[ \e = p_2(\Am)^{-1}p_1(\Am)\vv_f \]

  Choosing $\vv_f = \e_1$ gives a value for the residuation vector $\e$,
  and (since $\G$ is cyclic as a $\Z[x]$ module) a value $\e$ induces a 
  polynomial $p_{\e}$ such that $p_{\e} \cdot \e_1 = \e$. 
  Then, by construction, $\A$ is a subautomaton of $p_e^{-1} \cdot \G$, and is 
  anchored with $f$ at $\e_1$. As desired.
\end{proof}

For any automaton $\A$, we can now completely characterize in
which $\C$ it can be located, and at what vectors.
First locate $\A$ at $\e_1 \in p^{-1} \cdot \G$, and then to locate it at
any odd vector $\vv$, scale both sides by $p_{\vv}$ to see $\A$ located at
$\vv \in p_{\vv} p^{-1} \cdot \G$. 
In the above proof, the choice of $\vv_f = \e_1$ was arbitrary, and we can
directly locate $\A$ at a different odd vector $\vv'$ by setting 
$\vv_f = \vv'$. This will give the same result as locating it at $\e_1$ and 
then multiplying by $p_{\vv'}$, again, by cyclicity.
The same observation shows that, given some polynomial $q$ 
(equivalently some vector $q \cdot \e_1$) $\A$ is located somewhere in 
$q^{-1} \cdot \G = \mathfrak{C}(\Am,q \cdot \e_1)$ if and only if $p \mid q$. 
Further, it will be located at exactly $p^{-1}q \cdot e_1$.

\subsection{An Example}
Recall the abelian automaton $\A^3_2$ from earlier in the paper:

\begin{center}
\begin{tikzpicture}[scale=0.2]
\tikzstyle{every node}+=[inner sep=0pt]
\draw [black] (26.6,-12.1) circle (3);
\draw (26.6,-12.1) node {$f$};
\draw [black] (17.4,-25.6) circle (3);
\draw [black] (35.7,-25) circle (3);
\draw [black] (16.648,-22.708) arc (-174.31762:-254.2298:9.658);
\fill [black] (16.65,-22.71) -- (17.07,-21.86) -- (16.07,-21.96);
\draw (17.67,-14.97) node [left] {$1/0$};
\draw [black] (33.761,-27.277) arc (-48.14861:-128.09563:11.117);
\fill [black] (33.76,-27.28) -- (32.83,-27.44) -- (33.5,-28.18);
\draw [black] (29.501,-12.822) arc (67.8011:2.5992:10.451);
\fill [black] (29.5,-12.82) -- (30.05,-13.59) -- (30.43,-12.66);
\draw [black] (32.758,-24.474) arc (-108.88288:-180.71682:9.832);
\fill [black] (32.76,-24.47) -- (32.16,-23.74) -- (31.84,-24.69);
\draw (27.31,-22.21) node [left] {$0/1$};
\end{tikzpicture}
\end{center}

Say we want to find $\vv$ and $\e$ such that $\A^3_2$ is located at 
$\vv \in \C$.

Using the algorithm described by Becker \cite{Becker18:thesis}, we find
$\Am = \begin{pmatrix} -1 & 1 \\ -\frac{1}{2} & 0 \end{pmatrix}$.

Then notice $\del_0 \del_0 f = f$.
So $\Am^2 (\vv_f - \e) = \vv_f$, and
$\Am^2 \vv_f - \vv_f = \Am^2 \e$. Thus

\[ \e = \Am^{-2} (\Am^2 - I) \vv_f \]

Choosing $\vv_f = \e_1$ gives $\e = \begin{pmatrix} 3 \\ 2 \end{pmatrix}$.

Then $f = \begin{pmatrix} 1 \\ 0 \end{pmatrix} \in (3+2x)^{-1} \cdot \G$

\subsection{Limiting Object}
Just as we can recover $\Q$ as a limit of the fractional groups 
$\frac{1}{n} \Z$, we can define a group $\widetilde{\G}$ as the limit of
our $p^{-1} \cdot \G$. Indeed, just as
$\Q$ eliminates the parameter $n$ in $\frac{1}{n} \Z$, $\widetilde{\G}$ 
contains every automaton $\A$ at exactly one position, while removing the need
for the parameter $p$ (and thus, the parameter $\e$ in $\C$). Morever, it is
still effective to work with $\widetilde{\G}$, so we do not lose any of the
computability benefits of working with automata groups.

Formally, we define $\widetilde{\G}$ to be $\varinjlim p^{-1} \cdot \G$ 
where the colimit (in the category of abelian groups) is taken over the poset 
of polynomials (with odd constant term) under the divisbility ordering. This 
colimit can then be given residuation structure in a unique way which is 
compatable with the residuation structure on all the $p^{-1} \cdot \G$.

Explicitly, we look at the set 

\[ 
  \widetilde{\G} = \left \{ \left . \frac{\vv}{p} \ \right | \  \vv \in \Z^m, p \in \Z[x], p_0 \text{ odd} \right \} / \sim
\]

where we quotient by $\frac{\vv}{p} \sim \frac{q \cdot \vv}{pq}$ for every
$q$ with odd coefficient. We first endow this with a group structure by
declaring $[\frac{\vv}{p}] + [\frac{\overline{w}}{q}] = 
[\frac{q \cdot \vv + p \cdot \overline{w}}{pq}]$. Finally, we give
$\widetilde{\G}$ residuation structure too, by declaring 
$\del_0 [\frac{\vv}{p}] = [\frac{\del_0 \vv}{p}]$ where the residuation
on the right hand side takes place in $p^{-1} \cdot \G$. We define 
$\del_1 [\frac{\vv}{p}]$ analogously, and these operations are quickly seen
to be well defined.  This structure remains computable, because any finite 
computation we wish to do will have a least common denominator, and we can
simply work in a good enough approximation.

\section{Conclusion}
We have shown that the residuation vector $\e$ corresponds to how fine an
approximation of $\widetilde{\G}$ one wants. This is because each $\C$ 
corresponds to $p_{\e}^{-1} \cdot \G$, with progressively larger $\e$ 
corresponding to progressively more complicated fractional elements, which
approximate $\widetilde{\G}$. This allows us to characterize which automata
show up in which $\C$ (and, moreover, where they show up) by finding a minimal 
(in the division ordering) $p_{\e}$ in which a given automaton is found.

Further, the existence of the universal object $\widetilde{\G}$ 
sheds new light on the connection between affine tiles
\cite{LagariasWang96:tiles,LagariasWang97:integral_tiles}
and abelian automata noted by Sutner
\cite{Sutner18:abelian_automata}. 
Indeed it is easy to see that in 
$\widetilde{\G}$ every strongly connected component 
(and thus every subautomaton of interest) has each vector in the attractor 
of the iterated function system given by the residuation functions 
$\{ \vv \mapsto \Am \vv, \vv \mapsto \Am (\vv \pm \e_1) \}$.
Thus, in particular, the size of the principal machine is bounded by the
number of integral points in this attractor. Even in $\Z^2$, however, there
are examples where this bound is not tight.

The relation between automata and polynomials discussed in this paper 
also provides a new take on a proof technique for the longstanding
Strongly Connected Component Conjecture. This conjecture 
asserts that principal machines $\P$ have only one strongly connected component 
(plus the self looping identity state) whenever their matrix has a 
characteristic polynomial that is \emph{not} of the form $x^n + \frac{1}{2}$.
The new way of looking at residuation vectors allows us to rewrite the 
residual functions as $\del_i \vv = \Am (\vv - (-1)^i \delta)$ for $\vv$ odd.
It is easy to see, then, that the following polynomials correspond to paths
ending in $\delta$, since they undo residuation:

\begin{align*}
  P_\epsilon(x)   &= 1\\
  P_{w0}(x)       &= xP_w(x) + 0\\
  P_{w1}(x)       &= xP_w(x) + 1\\
  P_{w\bar{1}}(x) &= xP_w(x) - 1
\end{align*}

Sutner made a similar observation, and described Path Polynomials 
\cite{Sutner18:abelian_automata} which
allow us to reason about the existence of directed paths between states 
in an automaton by purely algebraic means. However, the traditional path 
polynomials are clunky and not always defined, since they correspond to paths
\emph{starting} at $\delta$, and so $P'_{w0} \cdot \delta$ is only well 
defined if $P'_w \cdot \delta$ is even (and $P'_{w1}$ and $P'_{w\bar{1}}$ 
are only well defined if $P_w \cdot \delta$ is odd). Since the polynomials 
defined above move \emph{backwards} along transitions instead of forwards, 
they are always well defined.

The existence of a path polynomial $p$ which is congruent to $-1$ mod $\chi^*$
then shows the existence of a path from $-\delta$ to $\delta$.
Then to prove the SCC conjecture, it suffices to prove that whenever $\Am$ 
does not have characteristic $x^n + \frac{1}{2}$ there is a polynomial 
$p \in \{-1,0,1\}[x]$ which is congruent to $-1$ mod $\chi^*$. 

\section*{Acknowledgements}
This paper would not exist without the advice of my advisor Klaus Sutner.
There aren't enough thanks for the hours of conversation I enjoyed.

\newpage

\bibliographystyle{splncs04}
\bibliography{bib.bib}

\end{document}